\newtheorem{definition}{Definition}
\newtheorem{theorem}{Theorem}
\newtheorem{lemma}{Lemma}
\newtheorem{property}{Property}
\begin{document}
%
\title{On the periodic behavior of real-time schedulers \\
on identical multiprocessor platforms}

\author{Emmanuel Grolleau \and Jo\"{e}l Goossens \and 
Liliana Cucu-Grosjean}

%


\maketitle

\begin{abstract}
This paper is proposing a general periodicity result concerning any deterministic and memoryless scheduling algorithm (including non-work-conserving algorithms), for any context, on identical multiprocessor platforms. By context we mean the hardware architecture (uniprocessor, multicore), as well as task constraints like critical sections, precedence constraints, self-suspension, etc. Since the result is based only on the releases and deadlines, it is independent from any other parameter. Note that we do not claim that the given interval is minimal, but it is an upper bound for any cycle of any feasible schedule provided by any deterministic and memoryless scheduler.
\end{abstract}


%

\maketitle

\section{Introduction}
\subsection{Feasibility and simulation intervals}
Real-time systems are widely used nowadays; their correctness has to meet functional and temporal requirements. Real-time scheduling theory focuses on the temporal validation of such systems. The validation of a real-time system relies on a worst-case behavior: each task is characterized by some temporal properties and constraints that have to be met by the scheduling algorithm. Most task models are based on the model initially defined in~\cite{Liu1973}.

One of the key problems in real-time scheduling is the schedulability problem~\cite{carpenter}\cite{davis}: a task system is schedulable by a scheduling algorithm if, in the worst-case scenario(s), all the temporal constraints are met. For some kinds of task systems, the worst-case scenario is easy to build. As an example, the worst-case scenario is the initial instant, known as the critical instant, for independent synchronous task systems (i.e. tasks are released at the same time), or for non-concrete task systems (i.e. the release times can occur anytime), executed on a uniprocessor platform. In this context, the worst-case response time of the tasks is encountered in the first synchronous busy period. This allows some efficient feasibility tests to exist, exact in such context, running in pseudo-polynomial time (e.g.~\cite{JP86}\cite{Lehoczky1990} for fixed-task priority scheduling like Deadline Monotonic or~\cite{JS93,BHR90} for fixed-job priority scheduling like Earliest Deadline First). 

However, the critical instant does not correspond to the first synchronous busy period when the tasks are asynchronous. In this case, a large time interval has to be considered in order to reach a cycle in the schedule, and the feasibility problem is NP-hard in the strong sense~\cite{Leung1982}. The only known exact feasibility tests are simulation-based like in~\cite{Goossens1999} and have to consider the whole schedule, which is infinite, therefore, we need a finite simulation interval.
\begin{itemize}
	\item Simulation interval: gives an exact or upper bound of the time interval for the schedule to repeat in a cycle.
	\item Feasibility interval: a finite interval $[a,b]$ such that if all the deadlines of jobs released in the interval are met, then the system is schedulable.
\end{itemize}

In general, a simulation interval can be used as a feasibility interval, and a feasibility interval can be smaller than a simulation interval. For example, in uniprocessor systems, the most known feasibility interval is the first busy period of a synchronous releases of the tasks when a fixed-priority (FPP) or EDF is used. If the system is schedulable, then this interval is smaller than the simulation interval which is the least common multiple of the periods of the tasks, a.k.a. hyperperiod.

Feasibility intervals are widely used to validate synchronous task systems executed on a single processor, but in the case of asynchronous task systems, the simulation intervals can be used under some conditions as feasibility intervals. The conditions when a simulation, or a test, can use a periodicity result as a feasibility interval rely on the properties of sustainability. This notion is defined in the next section, but the general idea is that in some contexts (e.g. critical sections involved) considering the worst-case execution time of a task in a simulation does not give the worst-case behavior (we say that the system is not C-sustainable). Reducing the duration of a task may worsen the response-time of a task in the system (this phenomenon is called a scheduling anomaly). This means that, in this context, a simulation of the behavior of the system cannot be used as a schedulability test. Nevertheless, even if in the considered context, some anomalies can occur, some schedulability tests can be adapted by taking some factor into account. For example, in the case of critical sections, when using a priority ceiling protocol~\cite{SRL90}, we can compute the maximal blocking factor due to lower priority tasks, and add it in the schedulability test, making this test C-sustainable.

The simulation can also be used in general to exhibit an incorrect behavior of the system. In this case, which is corresponding usually to non-sustainable cases, an incorrect sequence can be used to prove that the system is not schedulable. However, when searching such a counter-example, we also need to know how long the system must be simulated, and therefore we need a simulation duration. If the simulation duration is known for structural constrained task systems, it would be possible for timing analysis tools building a simulation to represent a specific behavior of the schedule, even when the tasks share resources, like in~\cite{brandenburg}\cite{nima}.

A periodicity result is also helpful for offline methods computing a feasible schedule which is meant to be repeated online infinitely. It is also helpful when characterizing the behavior of a system: statistics concerning the number of preemptions, migrations, response time variations, etc., have to be given on the cyclic part of a schedule.

\subsection{Tasks systems and scheduling algorithms}

We can classify the task systems using some relevant properties. A task system is said concrete if its first release times (offsets) are known, and if the tasks are periodic. It is non-concrete if the offsets are unknown (e.g. first release triggered by an external event), or if it is sporadic (i.e. the period is giving a minimal inter-arrival time for the tasks). On a worst-case scenario point of view, for uniprocessor systems, the fact that a system is non-concrete or sporadic is equivalent, because in both cases, the critical instant (i.e. synchronous release of the tasks) can exist during the life of the system. In a concrete task system, tasks are said synchronous if all their first releases are simultaneous, and asynchronous otherwise.

For non-concrete or sporadic task systems, only online scheduling algorithms can be used: the scheduling algorithm, usually priority driven, chooses to execute the ready job(s) with the highest priority. Concrete synchronous and asynchronous systems can be scheduled using an online scheduling algorithm or an offline (time-driven) schedule built to meet the temporal requirements (using a branch and bound, an enumeration algorithm, linear programming, etc.). It is important to note that, when executed online, the system may differ compared to its task model: the actual execution time may be lower than the WCET (Worst-Case Execution Time), and for sporadic task systems, the inter-arrival time may be higher than the period parameter.

Since the question of the simulation duration arises only when the tasks are time-driven (strictly periodic tasks), and that usually a time-driven task is released by the internal clock, we consider that the periods and offsets are natural numbers, which are in reality multiples of the internal clock granularity. A task system $S$ is a set of tasks $\tau_{i_{,i=1..n}}$, defined by:
\begin{itemize}
	\item $O_i \in \mathbb{N}$ the task offset, is the release date of the first job of $\tau_i$,
	\item $C_i \in \mathbb{R}$ the Worst-Case Execution Time (WCET) is the maximum amount of time the CPU has to spend to execute a job of $\tau_i$,
	\item $T_i \in \mathbb{N}$ the (strict) task period, the jobs are released at the instants $O_i+kT_i, k\in\mathbb{N}$,
	\item $D_i \in \mathbb{R}$ is the relative deadline and represents the timing constraint of a task: the $k^{th}, k\in\mathbb{N}$ job $j_{i,k}$ must be executed in the window $[O_i+kT_i,O_i+kT_i+D_i)$. If $\forall i \in \{1..n\}, D_i\leq T_i$ the system has constrained deadlines, else the system has arbitrary deadlines,
	\item $a_{i,j}=O_i+jT_i$ the activation time of the job $\tau_{i,j}$,
	\item $d_{i,j}=O_i+jT_i+D_i$ the absolute deadline of $\tau_{i,j}$,
	\item $H$ is the hyperperiod of a system $S$ given by $\textrm{lcm}(T_1,...,T_n)$ where $\textrm{lcm}$ is the least common multiple,
	\item $O^{\max}$ is the largest offset, $O^{\max}=\max_{i=1..n}(O_i)$,
	\item $U=\sum_{i=1}^{n}{C_i/T_i}$ is the processor utilization factor.
\end{itemize}

\begin{definition}[Predictable algorithm]
\label{def:predictable}
A scheduling algorithm is predictable for a class of task systems if reducing the duration of a task cannot delay the ending date of any job in the system compared to the ending date of the system where the tasks duration is the WCET.
\end{definition}

Predictability has been recently extended to most task parameters, and is included in the more general property of sustainability.
\begin{definition}[Sustainability] 
\label{def:sustainable}
Sustainable scheduling algorithm and sustainable schedulability test:
\begin{itemize}
\item A scheduling algorithm is sustainable for a class of task systems if improving some task parameter (reducing a WCET C, increasing a period T, increasing a deadline D) cannot delay the ending date of any job in the system compared to the original task system. We speak of C-sustainability, T-sustainability, D-sustainability, respectively.
\item A schedulability test is sustainable if improving some task parameter cannot invalidate a positive result of the test. Note that even if the underlying algorithm is not sustainable in a context, a corresponding schedulability test can be designed to be sustainable.
\end{itemize}
\end{definition}

The most common task parameters variations concern the execution time because conditions and loops are data-dependent code, as well as the efficiency of processor optimizations, like cache hit/miss, instructions prefetch, and pipelines. In sporadic task system, the inter-arrival time, even if it is called a period, is also a variable parameter.

For uniprocessor systems, most online scheduling algorithms are C-sustainable (i.e. predictable), for example~\cite{Cucu2006} shows that work-conserving (i.e. never let the processor idle if there is at least one ready job) algorithms are predictable for independent task systems. This property is important since for predictable algorithms in a class of task systems, the simulation can always be used in order to validate the system. However, work-conserving scheduling algorithms are not predictable when some tasks are not preemptable, or there are shared resources, or there are precedence constraints using synchronization mechanisms~\cite{Forget2010}. In the case of multiprocessor systems, the scheduling algorithm global-EDF is not T-sustainable.

Since we focus on the periodic behavior, only deterministic and memoryless scheduling algorithms are considered. We call a fixed-task priority scheduling algorithm an algorithm giving a unique priority to all the jobs of the same task (what is called fixed-priority policy (FPP) in the literature, like Rate Monotonic, or Deadline Monotonic or their global variants like RM-us~\cite{andersson}), and a fixed-job priority scheduling algorithm is a policy assigning a fixed priority to the jobs (e.g. Earliest Deadline First - EDF).

\subsection{Existing periodicity results in uniprocessor context}

The notion of pseudo-work-conserving scheduling, introduced in~\cite{CG04}, is used in the uniprocessor context to characterize the periodicity of schedules for a very wide class of scheduling algorithms. Unfortunately, we show in Section~\ref{sec:2} that this property cannot be used anymore in the multiprocessor case.

\begin{definition}[Pseudo-work-conserving scheduler~\cite{CG04}]
A pseudo-work-conserving algorithm is a scheduling algorithm allowed to insert idle slots, but not more than $H(1-U)$ idle slots can be purposely introduced per window of size $H$. This class includes every work-conserving algorithms, but also non work-conserving algorithms controlling the amount of inserted idle slots, for example using a task modeling the idle slots.
\end{definition}

The class of pseudo-work-conserving algorithm has been the largest scheduling class, in our knowledge, studied for the problem of periodicity. This class is including every popular scheduling algorithm, but is also including many specific offline scheduling algorithms.

Note that the necessary condition $U\leq 1$ has to hold in the following results. There are two kinds of methods used to calculate the simulation interval:
\begin{itemize}
	\item Processor utilization (equivalently idle slots) based: 
	\begin{itemize}
		\item The seminal work of~\cite{Leung1980} shows that $[0,O^{\max}+2H)$, the $2^{nd}$ hyperperiod being the cyclic part of a schedule, is an upper bound of the simulation interval for fixed-task priority schedulers, and independent task systems with constrained deadlines (i.e. $D_i\leq T_i$).
		\item It is shown in~\cite{Goossens1999} that, with arbitrary deadlines, $[0,O^{\max}+2H)$ is still giving an upper bound of the simulation interval for Earliest Deadline First, and FPP scheduling algorithms.
		\item The most general result concerning task systems with constrained deadlines is given in~\cite{CG04}. It shows how to determine the minimal simulation interval for any deterministic memoryless pseudo-work-conserving scheduling algorithm. The cyclic behavior of a schedule starts exactly at the date $\theta_c$, date following the last acyclic idle slot, thus the simulation interval is given by $[0,\theta_c+H)$. The date of the last acyclic idle slot is $0\leq \theta_c\leq O^{\max}+H$. This result has been extended to non-preemptible parts, precedence constraints, and resource sharing (note that these three extensions can be used to validate a system only in the case of offline scheduling because most online scheduling algorithms are not predictable in these contexts). This result has been extended to multi-threaded tasks in~\cite{Bado12}.
	\end{itemize}
	\item Priority based: an upper bound to the simulation interval is $[0,S_n+H)$~\cite{Goossens1997} for fixed-task priority scheduling algorithms, for independent tasks with constrained deadlines, where $S_n$ is calculated iteratively on the system, giving the tasks ordered by priority level: 
\begin{eqnarray}\label{eq:sn}
S_1&=&O_1 \\
S_i&=&\max (O_i,O_i+\left\lceil \frac{S_{i-1}-O_i}{T_i}\right\rceil T_i) \nonumber
\end{eqnarray}
\end{itemize}

We can notice that the feasibility or simulation interval problem for arbitrary deadlines systems is still an open problem in the case of any algorithm other than EDF or FPP: this paper will fill this gap with an upper bound. 

\subsection{Existing periodicity results in a multiprocessor context}

The periodic behavior of schedulers has been studied in the context of global scheduling on multiprocessor platforms for specific scheduling algorithms. For partitioned scheduling, as long as there is no migration, the simulation duration problem consists in studying the simulation duration on each processor: this is thus related to the uniprocessor problem. In the sequel, we consider the problem of the periodic behavior of global schedulers.

Every known result concerning global scheduling is concerning independent task systems. There are several periodicity results in~\cite{Cucu06} concerning constrained deadline systems, on uniform multiprocessor systems, that can be applied to the identical multiprocessor platforms. If the tasks are synchronous, then any feasible schedule generated by a deterministic and memoryless scheduler has a periodic behavior on the interval $[0,H)$. For asynchronous tasks systems, $[0,S_n+H)$ is a simulation interval of any feasible schedule generated by a FPP scheduler for asynchronous systems, using the same $S_n$ as in~\cite{Goossens1997}. 

The case of arbitrary deadlines systems has been studied in~\cite{CucuG07} for identical multiprocessor platforms. It is shown that any feasible schedule generated by a deterministic and memoryless scheduler is finally periodic. Moreover, for a feasible schedule generated by a FPP scheduler, $[0,H)$ is a simulation interval for synchronous systems, while $[0,\hat{S}_n+H]$ is a simulation interval for asynchronous systems, with, giving the tasks ordered by priority order:
\begin{eqnarray}\label{eq:sn2}
\hat{S}_1&=&O_1 \\
\hat{S}_i&=&\max \left(O_i,O_i+\left\lceil \frac{\hat{S}_{i-1}-O_i}{T_i}\right\rceil T_i\right)+H_i \nonumber
\end{eqnarray}
with $H_i=\textrm{lcm}_{j=1..i}(T_i)$.
This result has been extended to the case of unrelated multiprocessor platforms in~\cite{Cucu-GrosjeanG11}.

\paragraph*{This research}
This paper is the first result concerning the simulation interval applicable to a very large context: on identical multiprocessor platforms, for any deterministic and memoryless scheduler, scheduling asynchronous periodic tasks with arbitrary deadlines, subject to a large class of structural constraints (including precedence constraints, mutual exclusions, self-suspensions, preemptive or non-preemptive tasks). The results concerning multiprocessor platforms currently known in the literature all consider independent and preemptive periodic tasks scheduled by specific schedulers (in our knowledge, the global versions of FPP and EDF only). An interesting intermediate result, Lemma~\ref{lem:riDi}, shows that, for this problem, the synchronous case is a worst-case scenario.

\subsection{Organization of the paper}

In Section~\ref{sec:2}, we show on a simple motivating example that for synchronous task systems on multiprocessor system, the hyperperiod cannot be considered as a simulation interval for arbitrary deadline systems. In Section~\ref{sec:result}, we show that non-work-conserving scheduler have to be taken into account because even popular schedulers (such as global-EDF or global-FPP) do not behave as work-conserving scheduling algorithms for multiprocessor systems. Then we show that the set of feasible schedules for asynchronous task systems are included in the set of feasible schedules for synchronous arbitrary deadlines systems. This is allowing us to easily prove our general result which is Theroem \ref{th:general}. Section~\ref{sec:discuss} is a discussion about the possible ways to improve our upper bound.

\section{Motivational example}
\label{sec:2}
The periodicity results, for constrained deadlines synchronous systems, use the fact that the first time window of size $H$, where exactly $H/T_i$ jobs of each task $\tau_i$ are completely executed, is giving the cycle of a schedule, and that a feasible schedule has to conform to this pattern because the amount of releases and corresponding deadlines is exactly $H/T_i$ in this time window. Nevertheless, when a deadline $D_i$ is greater than the period $T_i$, we can imagine a feasible schedule spilling on the second hyperperiod (i.e. executing less than $H/T_i$ jobs for a task in a hyperperiod), thus inserting \emph{more} than $H(1-U)$ idle slots in this time window, and catching up in a second hyperperiod by inserting less than $H(1-U)$ idle slots, while still meeting the deadlines of the tasks. One can wonder what could be the advantages of such a policy, and just ignore this phenomenon by arguing that no algorithm purposely delays a schedule by inserting an acyclic idle slot. Nevertheless, this phenomenon can happen in schedules generated by popular work-conserving scheduling algorithms (e.g. global-EDF) for multiprocessor systems. This is why we need to understand it.

\subsection{Periodicity of work-conserving algorithms in the uniprocessor case}

In uniprocessor systems, for popular work-conserving schedulers like fixed-task priority schedulers or the EDF scheduler, we cannot observe any task spilling over another hyperperiod, as it is stated in the two following theorems.

\begin{theorem}
\label{th:edfarbitrary}
\cite{Goossens1999} In the context of asynchronous arbitrary deadline systems, with $U\leq 1$, on a single processor, a feasible schedule built by an EDF scheduler has a transient phase of at most $O^{\max}+H$ and a steady phase of size $H$.
\end{theorem}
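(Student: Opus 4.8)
The plan is to reduce the statement, via the determinism of EDF, to an equality between two system configurations one hyperperiod apart. For an instant $t\ge O^{\max}$ let the configuration $\gamma(t)$ be the finite multiset recording, for every job released before $t$ and not yet completed at $t$, its remaining execution requirement, its absolute deadline minus $t$, and its task together with its job index modulo $H/T_i$. This multiset is finite because feasibility forces at most $\lceil D_i/T_i\rceil$ jobs of $\tau_i$ to be pending at any instant. Since each $T_i$ divides $H$, for every $s\ge O^{\max}$ a job of $\tau_i$ is released at $s$ iff one is released at $s+H$, so the release stream is $H$-periodic on $[O^{\max},\infty)$. As EDF is deterministic and memoryless, its decision at time $t$ depends only on $\gamma(t)$; hence if $\gamma(t_1)=\gamma(t_2)$ with $t_1,t_2\ge O^{\max}$ and $t_1\equiv t_2\pmod H$, then, after the relabelling $\tau_{i,j}\mapsto\tau_{i,j+H/T_i}$, the two schedules agree at every later instant and $\gamma(t_1+x)=\gamma(t_2+x)$ for all $x\ge 0$. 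It therefore suffices to prove $\gamma(O^{\max}+H)=\gamma(O^{\max}+2H)$: this makes the schedule $H$-periodic on $[O^{\max}+H,\infty)$, i.e.\ gives a transient phase of length at most $O^{\max}+H$ followed by a steady phase of length exactly $H$.

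The work released in each window $[O^{\max}+kH,O^{\max}+(k+1)H)$ equals exactly $UH\le H$ and arrives with the same internal timing for every $k\ge 0$, so running EDF for one hyperperiod against this fixed periodic arrival pattern is a single deterministic map $F$, and the boundary configurations form the orbit $\gamma(O^{\max}),\,F(\gamma(O^{\max})),\,F^2(\gamma(O^{\max})),\dots$; the goal becomes to show this orbit is constant from its first iterate on. The core of the proof — and the step I expect to be the main obstacle — is exactly $\gamma(O^{\max}+H)=\gamma(O^{\max}+2H)$, i.e.\ that one hyperperiod past $O^{\max}$ already absorbs the start-up transient. My plan is a coupling argument: compare the schedule on $[O^{\max},\infty)$ with its own $H$-shift, which obeys EDF against the identical periodic release stream and differs only through its initial configuration, and show the two stay synchronised from time $O^{\max}+H$ on. Here the hypothesis $U\le 1$ is essential: each hyperperiod injects at most $H$ units of work, so a work-conserving scheduler can never fall permanently behind, the pending workload stays bounded, and the ``surplus'' of the transient run over its shift is non-increasing at hyperperiod boundaries, hence eventually — I will argue, already at the first boundary — zero; feasibility then pins down which jobs are pending, since every carried-over job must still meet its (possibly far-away) deadline, so EDF's earliest-deadline rule serves the residual ``acyclic'' workload in lock-step with the ``cyclic'' one.

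The genuinely delicate points are that arbitrary deadlines allow a job released before $O^{\max}$ to remain legitimately pending well past $O^{\max}+H$, so one cannot simply claim the start-up backlog has been flushed, and that real execution times make the configuration space infinite, ruling out a bare pigeonhole on boundary states. The argument must therefore be monotone rather than counting: either an idle instant occurring in the periodic regime serves as a regeneration point, whose image one hyperperiod later is again an idle instant carrying the same (empty) residual workload, or — in the boundary case $U=1$ with a processor that never idles again — one shows directly that the total pending workload is constant from $O^{\max}$ on and upgrades this to equality of the full configurations through the determinism lemma of the first paragraph. Assembling the cases yields $\gamma(O^{\max}+H)=\gamma(O^{\max}+2H)$ and hence the theorem.
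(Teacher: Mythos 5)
First, note that the paper does not prove this statement: it is imported verbatim from~\cite{Goossens1999}, so there is no in-paper proof to compare against; I can only assess your plan on its own terms. Its skeleton is the standard one and is sound as far as it goes: define a configuration $\gamma(t)$, observe that the release stream is $H$-periodic on $[O^{\max},\infty)$, invoke determinism and memorylessness to reduce the theorem to the single equality $\gamma(O^{\max}+H)=\gamma(O^{\max}+2H)$, and attack that equality by coupling the schedule with its own $H$-shift. You also correctly identify where the difficulty lives.

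The gap is that the one step carrying all the content is asserted rather than proved, and the two mechanisms you sketch for it do not work as stated. (i) ``The surplus of the transient run over its shift is non-increasing at hyperperiod boundaries, hence eventually --- already at the first boundary --- zero'' is a non sequitur twice over: a non-increasing real quantity need not reach zero, and even if it eventually did, nothing you say forces this at $k=1$, which is exactly what the bound $O^{\max}+H$ requires. (ii) In the never-idle case you propose to show the \emph{total} pending workload is constant and then ``upgrade this to equality of the full configurations through the determinism lemma'' --- but that lemma takes equality of configurations as its hypothesis; using it to conclude equality of configurations from a scalar identity is circular. What is missing is the componentwise monotonicity (predictability) of uniprocessor EDF: the $H$-shifted schedule is the EDF schedule of a \emph{superset} of the original release set (the shift adds jobs released before each $O_i$), and adding jobs to a uniprocessor EDF schedule never decreases any job's remaining work at any instant; this yields $\gamma(O^{\max}+kH)\leq\gamma(O^{\max}+(k+1)H)$ \emph{componentwise} for all $k\geq 0$. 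Only then does your accounting close the argument: if the processor never idles in $[O^{\max}+H,O^{\max}+2H)$ the total backlog cannot increase (work in $=UH\leq H=$ work out), which squeezed against the componentwise inequality gives equality of the full configuration; if it does idle, the work-conserving idle instant empties the backlog in both the run and its shift and the two coincide from there on. Without that monotonicity lemma --- which is the genuine technical heart, and not trivial for arbitrary deadlines since the extra carried-over jobs may have earlier deadlines than freshly released ones --- the proposal does not yet constitute a proof.
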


\begin{theorem}
\label{th:fpparbitrary}
\cite{Goossens99} In the context of asynchronous arbitrary deadline systems, with $U\leq 1$, on a single processor, a feasible schedule built by a fixed-task-priority algorithm has a transient phase of at most $S_n$ and a steady phase of size $H$, with $S_n$ obtained with Eq.~\ref{eq:sn}.
\end{theorem}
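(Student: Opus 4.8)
The plan is to argue by induction on the priority levels, exploiting the structural fact that, under a fixed-task-priority policy on a single processor, the instants at which $\tau_i$ executes depend only on $\tau_1,\dots,\tau_i$: the lower-priority tasks are never allowed to interfere. Write $\sigma^{(i)}$ for the schedule obtained by discarding all tasks of priority lower than $\tau_i$, and let $S_i$ be the quantity of Eq.~\ref{eq:sn}. I would prove the statement $P_i$: ``$\sigma^{(i)}(t)=\sigma^{(i)}(t+H)$ for every $t\ge S_i$''. The theorem is exactly $P_n$, since $\sigma^{(n)}$ is the schedule under consideration and $S_n$ the asserted bound on the transient phase.

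The base case $P_1$ is immediate: $U\le 1$ forces $C_1\le T_1$, and $\tau_1$, having the highest priority, executes each of its jobs in $[O_1+kT_1,\,O_1+kT_1+C_1)$; hence $\sigma^{(1)}$ is periodic with period $T_1$, and a fortiori with period $H$ (since $T_1\mid H$), from $O_1=S_1$. For the inductive step, assume $P_{i-1}$. The scheduler being deterministic and memoryless, $\sigma^{(i)}$ restricted to $[t_0,\infty)$ is entirely determined by (i) the configuration of $\tau_i$ at $t_0$, i.e.\ the remaining execution requirements of its pending jobs, (ii) the schedule $\sigma^{(i-1)}$ on $[t_0,\infty)$, and (iii) the release pattern of $\tau_i$ on $[t_0,\infty)$. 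Take $t_0=S_i$. Since $S_i\ge S_{i-1}$, $P_{i-1}$ gives that $\sigma^{(i-1)}$ is $H$-periodic from $S_i$; and since $S_i$ is a release instant of $\tau_i$ and $T_i\mid H$, the releases of $\tau_i$ on $[S_i,\infty)$ coincide with those on $[S_i+H,\infty)$. So it only remains to show that the configuration of $\tau_i$ at $S_i$ equals its configuration at $S_i+H$, and $P_i$ follows.

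Proving that last equality is the heart of the matter. When the deadlines are constrained ($D_i\le T_i$) it is easy: by definition of $S_i$ no job of $\tau_i$ is released in $[S_{i-1},S_i)$, so the most recent job of $\tau_i$ released before $S_i$ (released at $S_i-T_i$ if $S_i>O_i$) has absolute deadline at most $(S_i-T_i)+D_i\le S_i$, hence is complete by $S_i$ since the schedule is feasible; thus $\tau_i$'s configuration at $S_i$ is empty, and so is its configuration at $S_i+H$, and the two trivially coincide. For arbitrary deadlines this breaks down --- a job of $\tau_i$ released before $S_{i-1}$ may legitimately still be pending at $S_i$ --- and this is where the real work lies. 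The idea is to show that, even then, $\tau_i$'s configuration at $S_i$ equals its configuration at $S_i+H$, by comparing the level-$i$ backlog $t\mapsto b_i(t)$ with its $H$-shift $t\mapsto b_i(t+H)$ on $[S_{i-1},\infty)$: both are ``reflected at $0$'' processes driven by the same availability left by $\sigma^{(i-1)}$ (which is $H$-periodic from $S_{i-1}$ by $P_{i-1}$) and the same $H$-periodic release stream of $\tau_i$, so $|b_i(t)-b_i(t+H)|$ is non-increasing and vanishes once both processes have drained to $0$. On $[S_{i-1},S_i)$ no job of $\tau_i$ arrives, so both only drain; and since $U\le 1$ gives $\sum_{j<i}C_j/T_j\le (T_i-C_i)/T_i$, a careful accounting of the capacity left to $\tau_i$ against its demand over $[O_i,S_i)$ shows the former dominates the latter, forcing both processes to $0$ by $S_i$. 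Making that capacity-versus-demand accounting rigorous across the boundary where the higher-priority tasks become stationary --- in particular bounding the interference of jobs released before the offsets come into play --- is the step I expect to be the most delicate; the constrained-deadline case is merely the degenerate instance where the backlog is already empty.
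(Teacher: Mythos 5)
First, a point of order: the paper does not prove this statement --- Theorem~\ref{th:fpparbitrary} is quoted from \cite{Goossens99} purely as background --- so there is no in-paper argument to compare yours against, and I can only assess your sketch on its own terms. Your scaffolding is the standard one and is sound as far as it goes: on a uniprocessor, preemptive FPP lets you discard lower-priority tasks, the base case is correct, and the reduction of $P_i$ to ``the configuration of $\tau_i$ at $S_i$ equals its configuration at $S_i+H$'' is right. The constrained-deadline branch of that last step is also correct.

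The gap is exactly the step you flag as the heart of the matter: the claim that the level-$i$ backlog of $\tau_i$ drains to zero by $S_i$ is false for arbitrary deadlines, and the capacity-versus-demand accounting you propose cannot rescue it. Take $\tau_1$ (higher priority) with $O_1=2$, $C_1=3.4$, $T_1=7$, $D_1=7$, and $\tau_2$ with $O_2=0$, $C_2=2.5$, $T_2=5$ and $D_2$ large enough (any $D_2\geq 9$ works and is $>T_2$, so this is the arbitrary-deadline case); then $U=3.4/7+2.5/5<1$, $S_1=2$, $S_2=5$, $H=35$, and the system is feasible since $U<1$ keeps the backlog and response times bounded. The job of $\tau_2$ released at $0$ runs on $[0,2)$, is preempted by $\tau_1$ on $[2,5.4)$, and still has $0.5$ units pending at $S_2=5$. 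The reason your accounting fails is that $U\leq 1$ only budgets $\tau_1$'s interference over $[0,S_2)$ at its long-run average $(S_2/T_1)C_1\approx 2.43$, whereas the single job released at $O_1=2$ injects the full $C_1=3.4$ before $S_2$: the burst exceeds the utilization budget, which is precisely the ``boundary'' effect you anticipated would be delicate --- it is not merely delicate, it is false as a claim. Worse, continuing this schedule one finds $\tau_2$'s pre-state backlog is $0.5$ at $S_2$ but $1.4$ at $S_2+H$ (the processor idles on $[8.4,9)$, among other places, and that deficit is never recovered within one hyperperiod), so the two configurations genuinely differ and periodicity only sets in later. Your coupling argument correctly gives that $|b_i(t)-b_i(t+H)|$ is non-increasing, but without a common instant at which the \emph{larger} of the two backlogs vanishes it never forces equality, and no such instant exists by $S_i$. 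So the inductive step needs a genuinely different idea in the arbitrary-deadline case (note that the paper's own survey quotes $[0,O^{\max}+2H)$ from \cite{Goossens1999} for uniprocessor arbitrary deadlines, and the multiprocessor analogue in Eq.~\ref{eq:sn2} inserts an extra $+H_i$ at each level --- both hints that an additional hyperperiod of settling per priority level is where the real argument lives); as written, your proof establishes only the constrained-deadline instance.
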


In their global version, however, these scheduling algorithms do not behave as work-conserving algorithms. This is resulting in cases where a feasible schedule can have a cycle greater than a hyperperiod, as illustrated in the following motivating examples.
  
\subsection{Work-conserving algorithms in the uniprocessor case are not pseudo-work-conserving in the multiprocessor case}

As an example of how popular scheduling algorithm can behave as non-pseudo-work-conserving algorithms in the multiprocessor case, we consider a very counter-intuitive example. Let $Sys_1$ be a system containing three synchronous tasks executed on two processors: $\tau_1$, characterized by $O_1=0$, $C_1=1$, $T_1=D_1=2$, $\tau_2$, with the same parameters as $\tau_1$, and $\tau_3$ with $O_3=0$, $C_3=3$, $T_3=4$ and $D_3=7$. Note that the processor utilization factor of $Sys_1$ is $U_3=7/4$, the hyperperiod is $H=4$, and the task $\tau_3$ has a deadline greater than its period. The global-EDF schedule of the system $Sys_1$ is shown in Fig.~\ref{fig:edfmulti}. We can notice that during the two first hyperperiods (i.e. in the time interval [0,8)), there are two idle slots per hyperperiod. Giving the number of processors $m=2$ and the utilization factor, we know that in a steady state of period $kH, k\in \mathbb{N}$, the amount of idle slots is exactly $kH(m-U)$. For $Sys_1$, $U=7/4$ and $m=2$, thus there must be exactly $k$ idle slots in the cyclic part of a schedule of length $kH$. We can observe that the states of the system at the date $8$ and at the date $12$ are only given by the release of every task of the system, hence, the steady state of the schedule is given by the time interval $[8,12)$, while the transient phase, despite the fact that the tasks are synchronous, lasts during $8$ time units.

\begin{figure}[!t]
\centering
\includegraphics[width=3.5in]{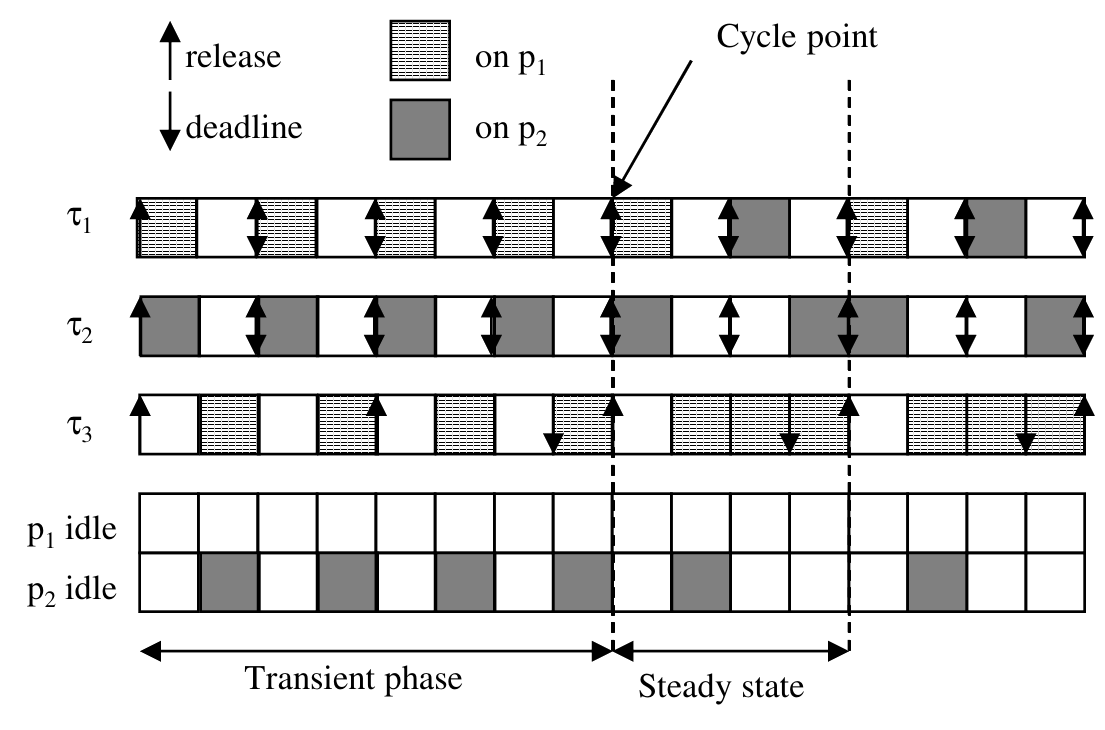}
\caption{Global-EDF schedule of $Sys_1$ on two processors}
\label{fig:edfmulti}
\end{figure}

If we consider the Longest Remaining Processing Time First (LRPTF) scheduling algorithm, the schedule of the same system has an empty transient phase (see Fig.~\ref{fig:lrptf}).

\begin{figure}[!t]
\centering
\includegraphics[width=2in]{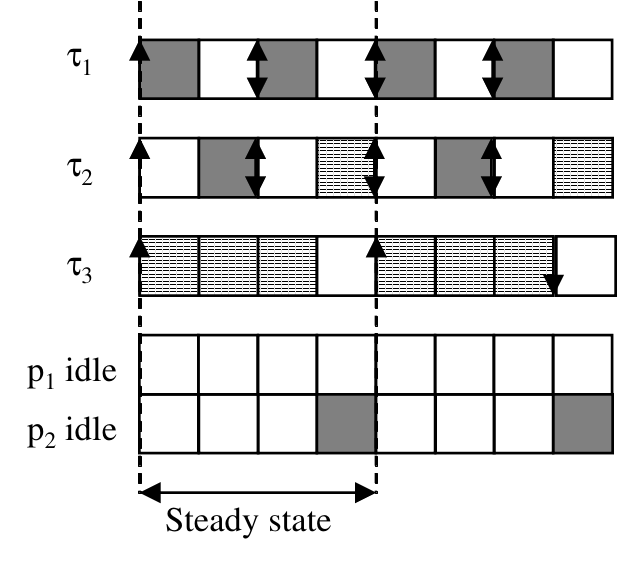}
\caption{LRPTF schedule of $Sys_1$ on two processors}
\label{fig:lrptf}
\end{figure}

We can conclude that global-EDF does not act as a pseudo-work-conserving algorithm since it is inserting more than $H(2-U)=1$ idle slots during the transient phase, while it is possible to insert only this amount of idle slots. The notion of pseudo-work-conserving algorithms has been introduced in~\cite{CG04} in the case of uniprocessor systems in order to generalize the simulation duration results to a wide class of schedulers, including every popular schedulers. We can see with this example that this property cannot be used anymore in the multiprocessor case. Therefore, a general result concerning every deterministic and memoryless scheduler, without idle times conserving constraints, is required. Such a result is the main contribution of this paper.

A Deadline Monotonic priority assignment (see Fig.~\ref{fig:dm}) of the system $Sys_1$ is not feasible, and never enters in a cycle where the good amount of idle slots is present. Since there are always two idle slots instead of one per hyperperiod, the lateness of the task $\tau_3$ is increasing with every hyperperiod. The transient phase of the schedule never ends, and in the third hyperperiod of the system, at the time instant $11$, $\tau_3$ misses its deadline.

\begin{figure}[!t]
\centering
\includegraphics[width=3.5in]{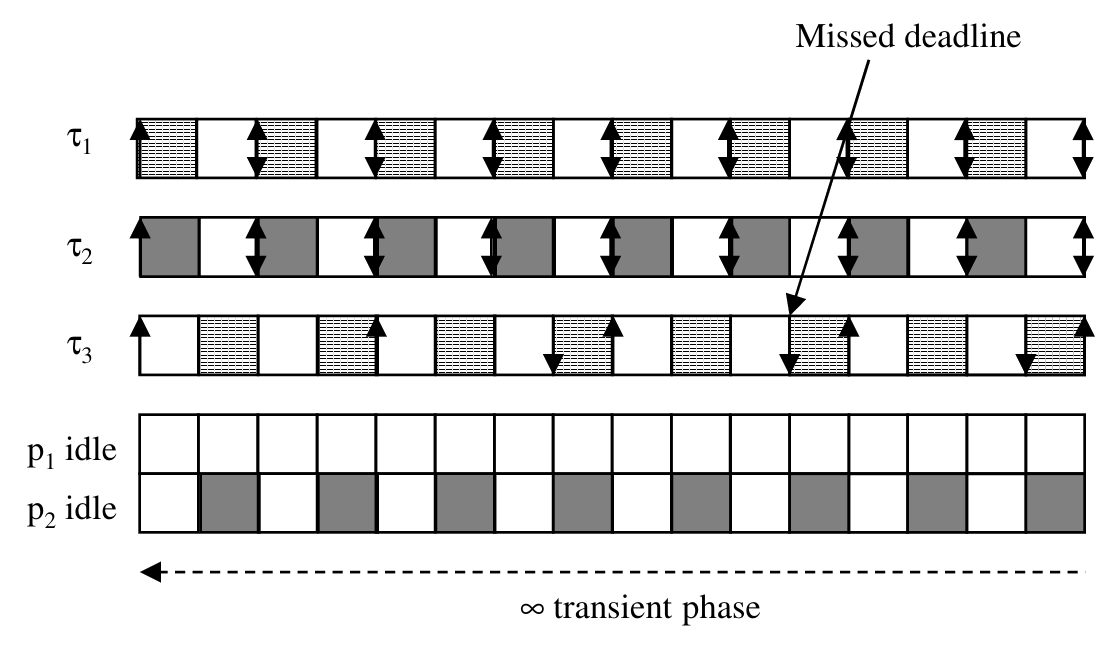}
\caption{Deadline Monotonic schedule of $Sys_1$ on two processors}
\label{fig:dm}
\end{figure}

The example of $Sys_1$ shows that, on a very simple case, a fixed-job priority scheduling algorithm (global-EDF), and a fixed-task priority scheduling algorithm (Deadline Monotonic), which are work-conserving for uniprocessor systems, act as non-pseudo-work-conserving algorithms in the multiprocessor case. It is showing also that in the case of synchronous systems with arbitrary deadlines, the time window $[0,H)$ cannot be used as a simulation interval.

\section{General periodicity result}
\label{sec:result}
This section investigates the behavior of any deterministic and memoryless scheduling algorithm. Note that we consider non-reentrant schedules, i.e. the subsequent jobs of the same task are executed in their release order. This behavior is often assured by the way the tasks are programmed on real-time operating systems. We define formally the required concepts.

\begin{definition}[State and pre-state of a system]
\label{def:state}
The state of a system of $n$ tasks can be defined as a $(2n)-tuple$ $S=C_{rem_1},...,C_{rem_n},\Omega_1, ...,\Omega_n$, where:
\begin{itemize}
\item $\Omega_i$ is the local clock of $\tau_i$, undefined before $O_i$, initialized at 0 at the time $O_i$, being reset at every period of the task. The value of $\Omega_i$ is hence in the domain $[0..T_i)$,
\item while $C_{rem_i}$ gives the remaining work to process for $\tau_i$.
\end{itemize}
The pre-state of a system of $n$ tasks can be defined as a $(2n)-tuple$ $\hat{S}=\hat{C}_{rem_1},...,\hat{C}_{rem_n},\Omega_1, ...,\Omega_n$, where:
\begin{itemize}
\item $\Omega_i$ is the same local clock as in the state $S$ of the system,
\item $\hat{C}_{rem_i}$ gives the remaining work to process for $\tau_i$ not taking the releases into account at the considered instant.
\end{itemize}
We can formalize the remaining work in state and pre-state, for any $t\geq O_i$ as:
\begin{eqnarray*}
\hat{C}_{rem_i}(O_i)&=&0 \\
C_{rem_i}(t)&=&\hat{C}_{rem_i}(t)+C_i\textnormal{ if }\Omega_i=0\\
&&\hat{C}_{rem_i}(t)\textnormal{ otherwise} \\
\hat{C}_{rem_i}(t+1)&=&C_{rem_i}(t)-1\textnormal{ if }\tau_i\textnormal{ executed on }[t..t+1) \\
&&C_{rem_i}(t)\textnormal{ otherwise}
\end{eqnarray*}
\end{definition}

\begin{definition}[Deterministic and memoryless scheduler]
\label{def:deterministic}
A scheduler is deterministic and memoryless if and only if the scheduling decision is the same for any identical state encountered.
\end{definition}

We can note that the notion of memoryless schedulers excludes schedulers taking the history into account, e.g., a scheduler that would affect a different priority to each job of the same task, depending on the job number, would be deterministic but not memoryless.

As an illustration of non-pseudo-work-conserving algorithm, we consider the task system $Sys_2$, composed of two synchronous tasks executed on a single processor: $\tau_1$ with $O_1=0$, $C_1=2$, $T_1=4$ and $D_1=5>T_1$, and $\tau_2$ with $O_2=0$, $C_2=1$, $T_2=D_2=4$. The utilization factor is $U_4=3/4$, and the hyperperiod of the system is $H=4$. Let us consider the graph of every feasible schedule of $Sys_2$ on Fig.~\ref{fig:schedgraph}. Every vertex is a reachable state, giving the values of $C_{rem_i}$. The local clocks can be obtained considering the global time given on the left side of the figure using the formula $\Omega_i=t\bmod{T_i}$. Every path is a feasible schedule of the system $Sys_2$: branching to the left corresponds to the execution of $\tau_1$, while branching to the right corresponds to the execution of $\tau_2$, and branching to the far right is inserting an idle slot. On this graph, a node marked as $cp$ is a cycle point, for example $cp_1$ and $cp'_1$ are the same state (same remaining times, same local clocks), represented twice in the figure in order to improve the readability. The cycle point $cp_1$ corresponds to a point where the load is given by the instantaneous release of all the tasks (the remaining load to process is thus $2,1$), and the global clock is $kH$ (i.e. the local clocks are both $0$), while the cycle point $cp'_1$ occuring at the instant $4+kH$ corresponds to the same state (that is why these two points are linked with a dashed line in the graph). The cycle point $cp_2$ at the date $1+kH$ corresponds to a state where the load is given by the two tasks to process ($2,1$ also), while both local clocks are given by $(1,1)$, but the system postponed the treatment by inserting an idle slot, consequently, the global clock advanced. The same state labeled $cp'_2$ can be reached at the time $5+kH$, which has the same remaining work and the same local clocks.

\begin{figure}[!t]
\centering
\includegraphics[width=3.5in]{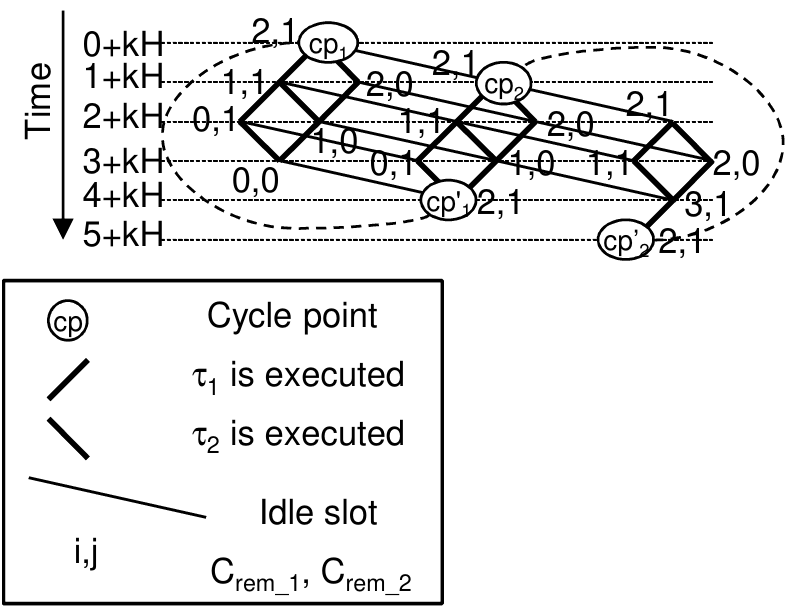}
\caption{All the feasible schedules of $Sys_2$ on one processor}
\label{fig:schedgraph}
\end{figure}

On this graph, a FPP scheduler is always giving priority to an arc compared to another one. For example, Deadline Monotonic will always choose an arc labeled $\tau_2$ over an arc labeled $\tau_1$ over an arc labeled idle slot. A fixed-job priority scheduler, like EDF, will also follow a priority scheme in the way to go down the graph. A work-conserving scheduling algorithm always chooses a task over an idle slot, while a pseudo-work-conserving algorithm allows only up to $H(1-U)$ (here one) idle slots per hyperperiod to be considered over a task. A deterministic and memoryless scheduler, when reaching twice the same node (i.e. state), makes the same scheduling decision.

Two feasible non-pseudo-work-conserving deterministic and memoryless schedules are presented in Fig.~\ref{fig:nonwcsched}, In the part $(a)$ of this figure, the cyclic part of the schedule lasts $2H$, because the execution of $\tau_1$ spills over the second hyperperiod. We can observe that, in order to obtain a cycle of two hyperperiods, the schedule is passing by $cp_1$ then $cp_2$, then $cp_1$, forcing the deterministic and memoryless schedule to behave cyclically. We can also notice that it is not possible to find a feasible deterministic and memoryless schedule of $Sys_2$ with a cyclic part of size $kH$ with $k>2$. Moreover, notice that the task $\tau_1$ is not finished in the first hyperperiod, but finished twice in the second hyperperiod of the cyclic part of the schedule.
In Fig.~\ref{fig:nonwcsched}$(b)$, we can observe a non-empty transient phase: the schedule behaves cyclically over $cp_2$, and we can notice that the state initiating the cyclic behavior of the schedule occurs at the date $2+kH$ with $k=0$ with the remaining load given by $2,0$. Note that by exploration of the paths of the graph, it was not possible to exhibit a transient phase longer than $H$ for $Sys_2$.

\begin{figure}[!t]
\centering
\includegraphics[width=3in]{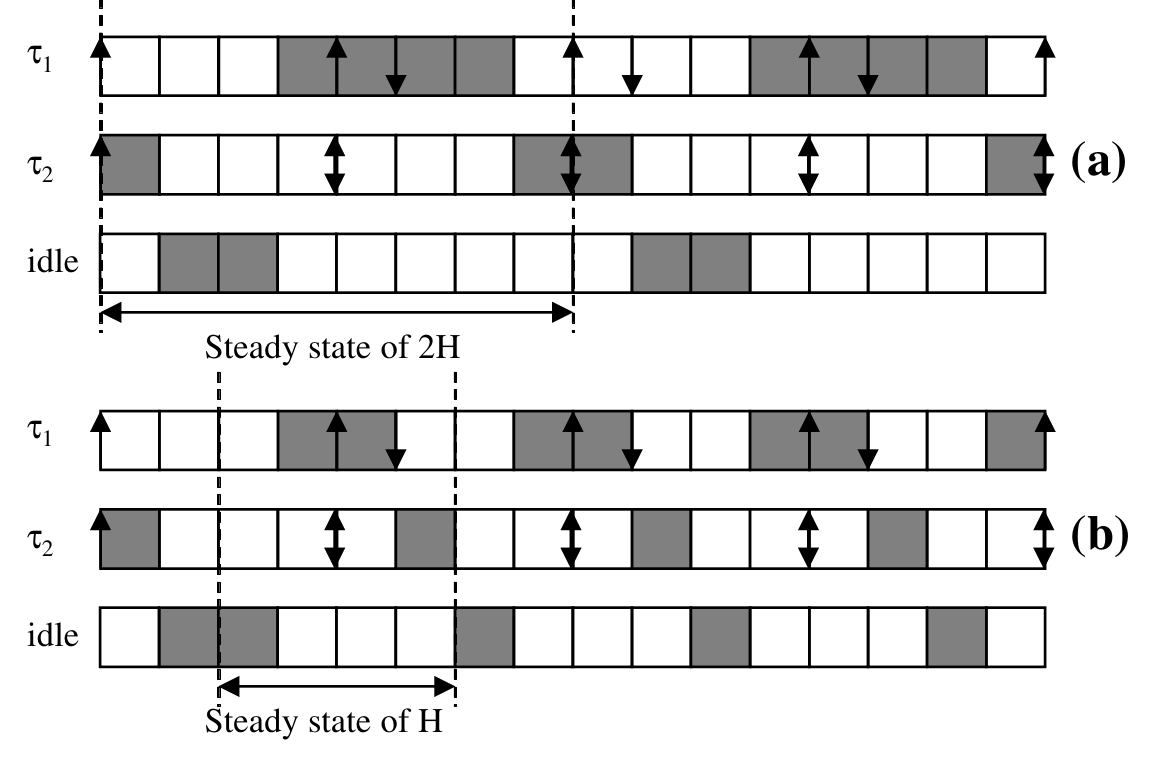}
\caption{Two feasible non-pseudo-work-conserving schedules of $Sys_2$ on one processor}
\label{fig:nonwcsched}
\end{figure}

Intuitively, bounding the cycle of schedules generated by deterministic and memoryless algorithms consists in imagining a scheduler trying to build the longest possible path in the graph of feasible schedules such that two identical are the farthest apart (because if it is reaching two identical states, then since the scheduler is deterministic and memoryless, then a cycle is reached). Note that the case of fixed-task priority schedulers and fixed-job priority schedulers like EDF, or even work-conserving schedulers, or pseudo-work-conserving schedulers, force to choose one direction over another one, forcing the schedule to behave cyclically \emph{sooner} than other algorithms, since reaching two identical states cannot be avoided. 

In a feasible graph, an infinite path has, eventually, to pass twice by the same vertex, accordingly, it is ultimately periodic. We have seen on the example given in Fig.~\ref{fig:nonwcsched}$(a)$ that the period was not necessarily the hyperperiod, but must be a \emph{multiple} of the hyperperiod, since the local clocks have to be identical.

Consider a system with two tasks $\tau_i$ and $\tau_j$ such that $D_i>T_i$ and $D_j>T_j$. The scheduler could build a schedule where $\tau_j$ is fully executed only $H/T_j-1$ in the first hyperperiod, while $\tau_i$ is fully executed $H/T_i$ times, in the second hyperperiod, $\tau_j$ is catching up (i.e. executed $H/T_j+1$ times), but $\tau_i$ is executed $H/T_i-1$ times, and in the third hyperperiod, $\tau_i$ is catching up, and $\tau_j$ is executed $H/T_j$ times, therefore the system cannot reach a cycle before three hyperperiods. We see that this leads to a length of the period of the schedule depending on the number of possible combinations of how late the tasks are allowed to be in every hyperperiod of the cyclic part of the schedule, and how they can catch up, until we reach twice a cyclic point. If a task is such that its deadline allows two of its jobs to spill over another (or several other) hyperperiods, then it is giving more possible combinations in the cyclic part of the schedule.

We can therefore imagine a very high upper bound for the size of the cyclic part of a feasible schedule, because it has to take into account any way a scheduler could possibly build a long schedule (long transient part, long cycle). We can also see on the examples in Fig.~\ref{fig:nonwcsched} that if the scheduler is pushing the transient part, then it can have an impact on the length of the cyclic part of the schedule: for $Sys_2$, any feasible schedule with a non-empty transient phase has a cyclic part lasting $H$ only.

The factor allowing a task $\tau_i$ to spill on another hyperperiod is the fact that at least one of its deadlines is not synchronized with the hyperperiod of the system. This is possible only in two cases:
\begin{enumerate}
\item \label{item:ri} the release time $O_i$ of $\tau_i$ is greater than 0, assuming that 0 is corresponding to the first release in the system,
\item \label{item:Di} the relative deadline $D_i$ of $\tau_i$ is greater than its period $T_i$.
\end{enumerate}

In order to simplify the following demonstrations, we show that the set of possible schedules for the case \ref{item:ri}) is included in the set of possible schedules for the case \ref{item:Di}).

\begin{definition}[Feasible schedule]
\label{def:feasiblesched}
Let S be a task system where tasks are defined by a first release time $O_i$, activated at a period $T_i$ and having a relative deadline $D_i$. A feasible schedule $\sigma$ is such that all the structural constraints (precedence constraints, mutual exclusions, suspensions, etc.) are respected, and the infinite set of jobs of every task $\tau_i$ is executed in its time window. We denote $s{_\sigma}(\tau_{i,j})$ (resp. $e{_\sigma}(\tau_{i,j})$) the starting date (resp. ending date) of the $j^{th}$ job of $\tau_i$ in the schedule $\sigma$. Every job is executed in its time window $[a_{i,j},d_{i,j}]$ if and only if it satisfies $s{_\sigma}(\tau_{i,j})\geq a_{i,j}$ and $e{_\sigma}(\tau_{i,j})\leq d_{i,j}$.
\end{definition}

In order to obtain a very general result on the periodicity, we define the notion of \emph{offset-independent constraints} on the tasks. This is meant to cover popular structural constraints like mutual exclusions, precedence constraints, suspension delays, non-preemptive tasks, etc.
\begin{definition}[Sub-job]
A sub-job $j$ is a part of a job in the content of a schedule.
\end{definition}

Since we consider time-based schedulers, a sub-job is thus an amount of time-units which is part of the same job. For example, we can consider every time units of a job as sub-jobs, as well as, the part between the second and the fourth time units of execution of the job.

\begin{definition}[Structural constraint]
A structural constraint is defined as a binary relation between two different sub-jobs $\cal{R}:$~$j_i \times j_k$, that may have additional arguments, like a delay. A schedule satisfies a structural constraint if the relative order of the sub-jobs and/or the required delays between the sub-jobs are met by the schedule.
\end{definition}

Some popular structural constraints are the relations \emph{excludes}, used to represent mutual exclusions between sub-jobs or non-preemptive tasks, \emph{precedes}, expressing precedence constraints between jobs, or \emph{suspends}$(t)$ where $t$ is a suspension delay, representing the suspension delay between subsequent sub-jobs of the same task, etc.

In order to introduce the next definition, it is important to note that a schedule is characterized by two parts: its content, i.e. the jobs executions on the processors, and its constraints, which are the time windows, and the structural constraints.

\begin{property}[Offset-independent structural constraint]
\label{prop:offsetindep}
Let $\cal{R}$ be a structural constraint satisfied in an original feasible schedule $\sigma$, which is characterized by an infinite set of jobs executed in their respective time window (see Def.~\ref{def:feasiblesched}). We denote ${\cal A}=\{a_1, a_2,\ldots\}$ the infinite set of activation times of the jobs. $\cal(R)$ is an \emph{offset-independent structural constraint} if this constraint is also satisfied in the schedule $\sigma'$ which is characterized by the same content as $\sigma$, but where the activation times ${\cal A}'=\{a'_1, a'_2,\ldots\}$ of the jobs can occur earlier than in the original schedule. 
\end{property}

\begin{figure}[!t]
\centering
\includegraphics[width=3.5in]{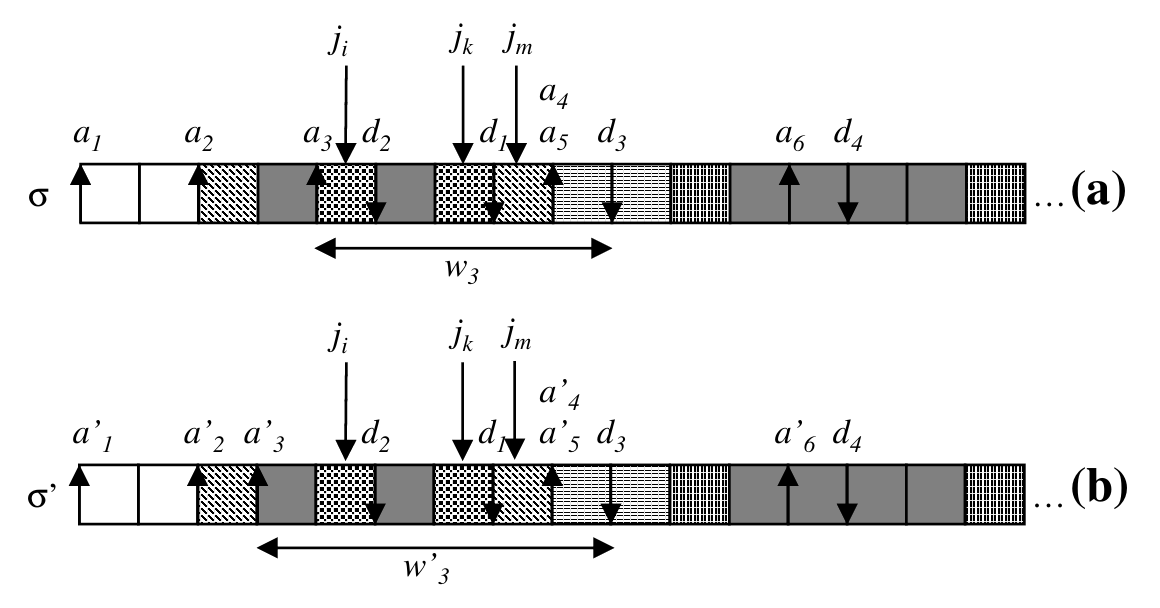}
\caption{(a) A feasible schedule $\sigma$ characterized by time windows, (b) $\sigma'$ has the same content but time windows can start earlier}
\label{fig:offsetindep}
\end{figure}

An illustration of Property~\ref{prop:offsetindep} is given in Fig.~\ref{fig:offsetindep}: in the part $(a)$, we consider a feasible schedule $\sigma$, where every job is executed in its respective time window. We focus on the window $w_3=[a_3,d_3]$, where a job, composed of the two sub-jobs $j_i$ and $j_k$, has to be executed. Now consider $\sigma'$ in part $(b)$ of the figure: the content of the schedule is the same as in $\sigma$, but a time window is larger since $a'_3 < a_3$. The temporal constraints are obviously met in $\sigma'$ if they are met in $\sigma$, because the time windows are the same or larger in $\sigma'$ than in $\sigma$. A structural constraint is offset-independent if the fact that it is respected in $\sigma$ implies that it is met also in $\sigma'$. As an example, suppose a structural constraint stating that the sub-job $j_k$ must precede the sub-job $j_m$: if this constraint is met in $\sigma$, then it is also respected in $\sigma'$ since the sub-jobs order is the same, and we can obviously see that a precedence constraint is offset-independent. Offset-independence is simply expressing the fact that, for popular structural constraint, considering a schedule independently from the release times does not have any impact on the respect of the structural constraints. We can make the same observation for other constraints such as exclusions or suspensions. Note also that some constraints are not offset-independent: for example, any constraint defined as an intermediate deadline using the activation time as time origin is not offset-independent.

\begin{lemma}
\label{lem:riDi}
Let S be a system where tasks have temporal parameters and may have offset-independent structural constraints. We denote $O_i$ the offset of the task $\tau_i$ and $D_i$ its relative deadline. Let S' be the same system, except for the release dates given by $O'_{i}=0$ and the relative deadlines $D'_{i}=D_i+O_i$. The set of feasible schedules for S is included in the set of feasible schedules for S'.
\end{lemma}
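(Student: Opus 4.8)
The plan is to show that any feasible schedule $\sigma$ for $S$ is, when viewed purely as a content (an assignment of sub-jobs of jobs to processors over time), also a feasible schedule for $S'$. The key observation is that $S$ and $S'$ share the same period $T_i$ for each task and the same number of processors, so the $j$-th job of $\tau_i$ exists in both systems; only the time windows differ. In $S$, the $j$-th job of $\tau_i$ has window $[O_i + jT_i,\, O_i + jT_i + D_i]$, whereas in $S'$ it has window $[jT_i,\, jT_i + D_i + O_i]$. The left endpoint moves from $O_i+jT_i$ down to $jT_i$ (earlier, since $O_i\ge 0$), and the right endpoint stays at $O_i + jT_i + D_i$. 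Hence the $S'$-window \emph{contains} the $S$-window for every job of every task.

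First I would fix a feasible schedule $\sigma$ for $S$ and define $\sigma'$ to be the schedule with exactly the same content: for every sub-job, $\sigma'$ executes it on the same processor over the same time interval as $\sigma$ does. I then need to check that $\sigma'$ is feasible for $S'$, which by Definition~\ref{def:feasiblesched} has two requirements. For the timing constraints: since $\sigma$ is feasible for $S$, each job $\tau_{i,j}$ satisfies $s_\sigma(\tau_{i,j}) \ge O_i + jT_i$ and $e_\sigma(\tau_{i,j}) \le O_i + jT_i + D_i$; because $\sigma'$ has the same content, $s_{\sigma'}(\tau_{i,j}) = s_\sigma(\tau_{i,j}) \ge O_i + jT_i \ge jT_i = a'_{i,j}$ and $e_{\sigma'}(\tau_{i,j}) = e_\sigma(\tau_{i,j}) \le O_i + jT_i + D_i = jT_i + D'_i = d'_{i,j}$, so each job of $S'$ is executed in its time window. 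For the structural constraints: the activation times in $S'$ satisfy $a'_{i,j} = jT_i \le O_i + jT_i = a_{i,j}$, i.e. every job is released no later in $S'$ than in $S$, which is exactly the hypothesis of Property~\ref{prop:offsetindep}; since all structural constraints are assumed offset-independent, they remain satisfied by the unchanged content $\sigma'$. Therefore $\sigma'$ is feasible for $S'$, and the map $\sigma \mapsto \sigma'$ embeds the feasible schedules of $S$ into those of $S'$.

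The only subtlety — and the step I expect to need the most care — is the bookkeeping that the job-to-job correspondence between $S$ and $S'$ is the natural one and that ``same content'' is a legitimate notion here: we must be sure that nothing in $S'$ forces a job to \emph{start} before it is allowed to, and indeed it cannot, since no left endpoint increased. One should also note that reducing offsets can only enlarge windows, never shrink them, so no feasibility that held in $\sigma$ is lost; this is where the inequality $O_i \ge 0$ is used, together with the convention that activation times and job indices are aligned because the periods are unchanged. No converse is claimed (the inclusion may be strict), so nothing further is required. This completes the argument.
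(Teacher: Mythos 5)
Your proof is correct and takes essentially the same route as the paper's: keep the schedule content unchanged, observe that each job's window in $S'$ contains its window in $S$ (same right endpoint $O_i+jT_i+D_i = jT_i+D'_i$, earlier left endpoint since $O_i\ge 0$), and invoke offset-independence for the structural constraints. Your version simply spells out the window-containment bookkeeping that the paper leaves implicit.
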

\begin{proof}
Let $\sigma$ be a feasible schedule for S, then the sequence $\sigma$ is also a feasible schedule for S' since the same offset-independent structural constraints are met (same relative order of the sub-jobs) and $s{_\sigma}(\tau_{i,j})\geq 0+jT_i$ and $e{_\sigma}(\tau_{i,j})\leq 0+jT_i+D^{'}_i$.
\end{proof}

The underlying idea behind Lemma \ref{lem:riDi} is that the time window allocated to every job in S' is including the time window allocated to every job in S. Since we are interested to any deterministic and memoryless scheduling algorithm, we see that focusing only on synchronous task systems with an arbitrary deadline cannot reduce the possibilities for a scheduling algorithm to delay its cyclic part. For this reason, an upper bound on the case where deadlines are arbitrary is also an upper bound for asynchronous task systems.
\begin{lemma}
\label{lem:statesync} For synchronous task systems, if two pre-states are identical, then the scheduling decision of a deterministic and memoryless scheduler is the same.
\end{lemma}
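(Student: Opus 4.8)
The plan is to show that the state of the system is a deterministic function of the pre-state, and then to invoke Definition~\ref{def:deterministic}, which already guarantees that a deterministic and memoryless scheduler makes the same decision in any two identical \emph{states}. So the whole lemma reduces to one implication: identical pre-states $\Rightarrow$ identical states.

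First I would unwind Definition~\ref{def:state}. A pre-state is the tuple $\hat{S}=(\hat{C}_{rem_1},\dots,\hat{C}_{rem_n},\Omega_1,\dots,\Omega_n)$ and the state of the same system at the same instant is $S=(C_{rem_1},\dots,C_{rem_n},\Omega_1,\dots,\Omega_n)$, where the $\Omega_i$-components are literally shared by the two tuples. The remaining-work components are linked by the rule $C_{rem_i}=\hat{C}_{rem_i}+C_i$ whenever $\Omega_i=0$, and $C_{rem_i}=\hat{C}_{rem_i}$ otherwise. For a synchronous task system every $\Omega_i$ is defined for all $t\geq 0$ (there is no prefix $t<O_i$ on which the local clock of $\tau_i$ is undefined, since all $O_i=0$), so this relation is total over all reachable instants.

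Next I would observe that the right-hand side of this relation depends only on data contained in the pre-state: $\hat{C}_{rem_i}$ is a component of the pre-state, the WCET $C_i$ is a fixed system parameter, and the predicate ``$\Omega_i=0$'' is read off directly from the $\Omega_i$-component of the pre-state. Hence the assignment $\hat{S}\mapsto S$ is a well-defined function. In particular, if two pre-states $\hat{S}_1$ and $\hat{S}_2$ agree componentwise, then applying this function componentwise yields states $S_1$ and $S_2$ that also agree componentwise, i.e.\ $S_1=S_2$. Then Definition~\ref{def:deterministic} applies: the scheduler takes the same decision in $S_1$ and in $S_2$, and therefore the same decision in $\hat{S}_1$ and in $\hat{S}_2$.

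There is no genuine obstacle here; the argument is essentially bookkeeping on the definitions. The only point that deserves care is making explicit \emph{why} the synchronicity hypothesis is needed: the pre-state-to-state map must not covertly depend on the global clock, and synchronicity is exactly what ensures this — ``which jobs are released at the current instant'' is fully captured by the set $\{i:\Omega_i=0\}$, with no ambiguity introduced by tasks that would not yet be active in an asynchronous setting (where some $\Omega_i$ would be undefined and the map would only be partial). I would state this caveat explicitly so the reader sees that Lemma~\ref{lem:riDi}, which reduces the asynchronous case to the synchronous one, is what makes restricting to synchronous systems harmless in the sequel.
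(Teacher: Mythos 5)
Your proof is correct and follows essentially the same route as the paper's: both arguments reduce the lemma to showing that identical pre-states yield identical states via the map $C_{rem_i}=\hat{C}_{rem_i}+C_i$ when $\Omega_i=0$ (and $C_{rem_i}=\hat{C}_{rem_i}$ otherwise), then invoke Definition~\ref{def:deterministic}. The only cosmetic difference is in how the synchronicity hypothesis is motivated --- the paper notes that identical local clocks force the two instants to differ by a multiple of $H$, while you note that all local clocks are defined from $t=0$ so the pre-state-to-state map is total --- but this does not change the substance of the argument.
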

\begin{proof}
Following the definition of deterministic and memoryless schedulers, if two states are identical, then the scheduling decision is the same. This Lemma states that it is sufficient to consider the pre-state in the case of synchronous systems. Indeed, if two pre-states are identical, their local clocks are the same (and so do the clocks of the corresponding states). Considering $t$ and $t'$ the respective time instants where $\hat{S}$ and $\hat{S'}$ occur, it is easy to see that $t'=t+kH, k\in\mathbb{N}$, which are the only possible solutions such that every local clock, all starting at the instant $0$ (the system is synchronous), are the same. If the values of the remaining work are the same in two pre-states $\hat{S}$ and $\hat{S'}$, then the values are also the same for the corresponding states $S$ and $S'$ because giving Def.~\ref{def:state}, we have $C_{rem_i}(t)=\hat{C}_{rem_i}(t)+C_i$ if $\Omega_i=0$, and $C'_{rem_i}(t)=\hat{C'}_{rem_i}(t')+C_i$ if $\Omega'_i=0$. Since $\Omega_i=\Omega'_i$, and $\hat{C}_{rem_i}(t)=\hat{C'}_{rem_i}(t')$, then $C_{rem_i}(t)=C'_{rem_i}(t')$, and $S=S'$.
\end{proof}

It follows from Lemma \ref{lem:statesync} that we can only focus on the pre-states to prove the periodicity of synchronous task systems.

\begin{lemma}
\label{lem:general}
Any feasible schedule of a synchronous task system generated by a deterministic and memoryless scheduler reaches a cycle at or prior to $\left(\prod_{i=1}^{n}{((D_i-T_i)_0+1)}\right)H$, where $(a)_0=\max(a,0)$.
\end{lemma}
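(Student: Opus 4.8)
The plan is to sample the schedule at the hyperperiod boundaries $0,H,2H,\ldots$, to show that only $N:=\prod_{i=1}^{n}\bigl((D_i-T_i)_0+1\bigr)$ distinct configurations can occur at those instants, and then to invoke determinism and memorylessness. Concretely, by Def.~\ref{def:state} and Lemma~\ref{lem:statesync} the pre-state at an instant $kH$ is fully described by the tuple of remaining works $\bigl(\hat C_{rem_1}(kH),\ldots,\hat C_{rem_n}(kH)\bigr)$, since the system is synchronous and $T_i\mid H$, so every local clock $\Omega_i$ is $0$ there. If two sampled pre-states $s_a$ and $s_b$ ($a<b$) coincide, Lemma~\ref{lem:statesync} and the fact that the scheduler is deterministic and memoryless force the schedule from time $aH$ on to repeat with period $(b-a)H$; pigeonholing the $N+1$ pre-states $s_0,\ldots,s_N$ then produces such a pair with $b\le N$, so the cyclic behaviour starts no later than $aH\le (N-1)H<NH$, which is the claim.

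It therefore remains to bound, for each task $\tau_i$, the number of values $\hat C_{rem_i}(kH)$ can take in a feasible schedule. Fix $i$ and $k$, and write $m=kH/T_i\in\mathbb N$. Since the schedule is feasible (Def.~\ref{def:feasiblesched}), every job of $\tau_i$ whose absolute deadline is $\le kH$ is already complete at $kH$; since moreover the jobs of $\tau_i$ are non-reentrant and their absolute deadlines are increasing, the jobs still pending in the pre-state at $kH$ form a contiguous block of the most recently released ones, $\tau_{i,\ell_0},\ldots,\tau_{i,m-1}$. If $D_i\le T_i$ this block is empty: any job $\tau_{i,\ell}$ released before $kH=mT_i$ satisfies $\ell\le m-1$ and hence has deadline $\ell T_i+D_i\le (m-1)T_i+T_i=kH$, so it is finished; thus $\hat C_{rem_i}(kH)=0$ identically. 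If $D_i> T_i$, the latest deadline among the pending jobs is that of $\tau_{i,m-1}$, namely $(m-1)T_i+D_i=kH+(D_i-T_i)$, so all the pending work $\hat C_{rem_i}(kH)$ must be completed within the window $[kH,\,kH+(D_i-T_i))$. A job is a sequential entity and successive jobs of a task execute in release order, so this pending work is processed sequentially and requires at least $\hat C_{rem_i}(kH)$ time units; hence $\hat C_{rem_i}(kH)\le D_i-T_i$. As execution advances in unit slots, $\hat C_{rem_i}(kH)$ is a non-negative integer, so it ranges over $\{0,1,\ldots,(D_i-T_i)_0\}$, a set of cardinality $(D_i-T_i)_0+1$.

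Combining the two steps, the number of pre-states reachable at multiples of $H$ is at most $\prod_{i=1}^{n}\bigl((D_i-T_i)_0+1\bigr)=N$, which finishes the argument via the pigeonhole remark of the first paragraph. The step I expect to be the crux is the per-task inequality $\hat C_{rem_i}(kH)\le(D_i-T_i)_0$: it hinges on two structural observations that deserve a careful statement — that feasibility together with deadline monotonicity forces the pending jobs at a hyperperiod boundary to be a contiguous block of recent jobs all due by the common instant $kH+(D_i-T_i)$, and that the non-reentrant, sequential execution of a task's jobs makes their carried-over work behave as a single sequential chunk that must fit in the $D_i-T_i$ slots before that instant. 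The sampling at multiples of $H$, the vanishing of the clocks there, and the pigeonhole/periodicity conclusion are then routine. As the lemma itself cautions, $NH$ is merely an upper bound and is in general far from tight — the trade-off between a longer transient and a shorter cyclic part observed on $Sys_2$ already indicates where slack remains.
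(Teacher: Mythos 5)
Your proof is correct and follows essentially the same route as the paper's: sample the pre-states at hyperperiod boundaries (where all local clocks vanish), bound the carried-over work of each task by $(D_i-T_i)_0$, count at most $\prod_{i=1}^{n}((D_i-T_i)_0+1)$ reachable pre-states, and conclude by pigeonhole together with Lemma~\ref{lem:statesync} and determinism/memorylessness. If anything, your sequential-execution argument for the per-task inequality $\hat C_{rem_i}(kH)\le (D_i-T_i)_0$ is spelled out more explicitly than in the paper, which essentially asserts it via the feasibility of the schedule.
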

\begin{proof}
Note that the pre-state at the time 0 is given by $\hat{S}_0^0$ in Fig.~\ref{fig:stateproof1}. On this figure, since only hyperperiods are considered, the pre-states can be represented only by the values of $C_{rem_i}$, every local clock being null. In order to prove the Lemma, we will show that the number of distinct pre-states for every hyperperiod $kH, k\in\mathbb{N}^+$, in any feasible schedule, is bounded by $\prod_{i=1}^{n}{((D_i-T_i)_0+1)}$.
\begin{itemize}
\item Constrained deadlines case: if every task has a constrained deadline (i.e., $D_i \leq T_i$), then if the pre-state reached at the date $H$ is not $0$ for every remaining time, then the schedule is not feasible. Indeed, every job started during the first hyperperiod has to be finished before the end of this hyperperiod. As a result there is only one possible pre-state at the date $H$ for any feasible schedule, which is identical to the initial state. Hence, any feasible schedule is cyclic over the interval $[0,H)$, showing the Lemma for this case.
\item Case of only one task, $\tau_i$, having a deadline greater than their period. We first give the proof sketch for $D_i=T_i+1$. In any feasible schedule there is at most one time unit of the $(H/T_i)^{th}$ job of $\tau_i$ that remains at the time instant $H$, else the system cannot be feasible, while every other job released in the first hyperperiod has to be finished since $D_j\leq T_j \forall j\neq i$. Therefore the only possible pre-states of the system in a feasible schedule at the date $H$ can be defined by $\hat{S}_0^H$ and $\hat{S}_1^H$ in Fig.~\ref{fig:stateproof1}. Note that $\hat{S}_0^H$ is the same as $\hat{S}^0_0$, and so if the schedule reaches this state, then it is behaving cyclically from this point: the schedule $[0,H)$ will be repeated infinitely. If the system is in $\hat{S}_1^H$, then consider the schedule at the time $2H$: there again, only  two possible pre-states can be part of a feasible schedule, $\hat{S}_0^{2H}=\hat{S}_0^0$ and $\hat{S}_1^{2H}=\hat{S}_1^H$. If the system is in the pre-state $\hat{S}_0^{2H}$ then the schedule behaves cyclically over the interval $[0,2H)$; else the schedule has a transient phase on $[0,H)$ (from pre-state $\hat{S}_0^0$ to $\hat{S}_1^H$) followed by a cyclic phase on $[H,2H)$ (from $\hat{S}_1^H$ to $\hat{S}_1^{2H}$). The maximal simulation duration is hence $2H$, proving the theorem for one task having $D_i=T_i+1$.

Now suppose that $D_i=T_i+k$ with $k$ an arbitrary finite positive integer. If we name $\hat{S}_j^{pH}$ any reachable pre-state in a feasible schedule where $0\leq j\leq k$ gives the remaining work to process for $\tau_i$ at the date $pH$, with $p$ a positive integer, it is obvious that there are only $k+1$ possible different pre-states. As a consequence, the possible cyclic behaviors of any feasible schedule are bounded by $(k+1)H$. Any combination of a transient phase lasting over $[0,qH)$ followed by a cyclic phase over $[qH,rH)$ with $0\leq q < k$, $r\geq q+1$ and $r\leq k$ can be a feasible memoryless and deterministic schedule.
\begin{figure}[!t]
\centering
\includegraphics[width=3in]{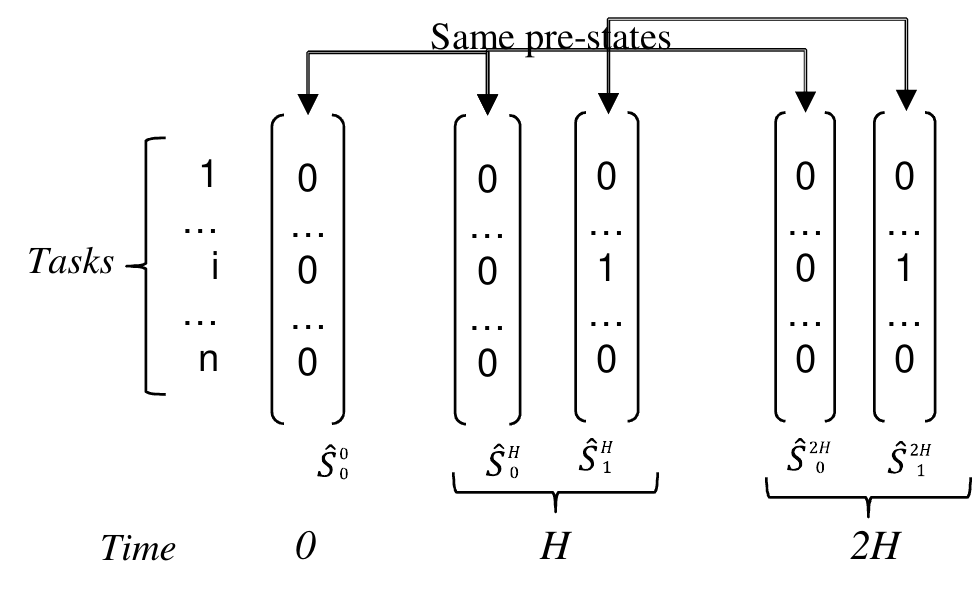}
\caption{States that can be reached in a feasible schedule at times $0$, $H$ and $2H$ for $D_i-T_i=1$}
\label{fig:stateproof1}
\end{figure}
\item If several tasks have a deadline greater than their period, then we could represent the pre-states that may be reached by a feasible schedule each hyperperiod $H$ as a $n$-dimensional matrix given by the Cartesian product of pre-states where each task can be delayed by an amount between $0$ and $(D_i-T_i)_0$. The number of elements of this matrix is therefore $\prod_{i=1}^{n}{((D_i-T_i)_0+1)}$. As a result, it is impossible for a feasible schedule not to have reached two identical pre-states after $\left(\prod_{i=1}^{n}{((D_i-T_i)_0+1)}\right)H$ time units.
\end{itemize}
\end{proof}

\begin{theorem}
\label{th:general}
Any feasible schedule of a task system generated by a deterministic and memoryless scheduler reaches a cycle at or prior to $\left(\prod_{i=1}^{n}{((O_i+D_i-T_i)_0+1)}\right)H$.
\end{theorem}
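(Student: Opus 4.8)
The plan is to obtain Theorem~\ref{th:general} as a direct combination of Lemma~\ref{lem:riDi} and Lemma~\ref{lem:general}, so that the general asynchronous, arbitrary-deadline case never has to be analysed on its own. Given the system $S$ with offsets $O_i$ and relative deadlines $D_i$, I introduce the synchronous system $S'$ of Lemma~\ref{lem:riDi}: the same $n$ tasks, the same periods $T_i$, the same WCETs and the same offset-independent structural constraints, but with $O'_i=0$ and $D'_i=O_i+D_i$. The first thing to record is that $S$ and $S'$ have exactly the same multiset of periods, hence the same hyperperiod $H$; consequently Lemma~\ref{lem:general} applied to the \emph{synchronous} system $S'$ says that any of its feasible schedules produced by a deterministic and memoryless scheduler reaches a cycle at or prior to $\left(\prod_{i=1}^{n}((D'_i-T_i)_0+1)\right)H$, and substituting $D'_i=O_i+D_i$ rewrites this bound as exactly $\left(\prod_{i=1}^{n}((O_i+D_i-T_i)_0+1)\right)H$, the quantity claimed in the theorem.

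Next I would transfer this back to $S$. Let $\sigma$ be any feasible schedule of $S$ produced by a deterministic and memoryless scheduler $\mathcal{A}$. By Lemma~\ref{lem:riDi} the sequence $\sigma$ --- its content, namely which job runs on which processor at each instant, being unchanged --- is also a feasible schedule of $S'$, since every job of $\tau_i$ now lives in the wider window $[jT_i,\,jT_i+O_i+D_i]$ and the structural constraints, being offset-independent, are still satisfied. The state-counting argument inside the proof of Lemma~\ref{lem:general} is purely a feasibility statement about synchronous systems: applied to $S'$ it shows that at the instants $kH$ a feasible schedule can occupy at most $\prod_{i=1}^{n}((O_i+D_i-T_i)_0+1)$ distinct pre-states, so along $\sigma$ there are two instants $t=qH$ and $t'=rH$ with $0\le q<r\le\prod_{i=1}^{n}((O_i+D_i-T_i)_0+1)$ at which the pre-state of $S'$ recurs.

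It then remains to argue that this recurrence forces $\sigma$ to be cyclic \emph{as a run of $\mathcal{A}$ on $S$}. Because all periods of $S$ and $S'$ coincide and $rH-qH$ is a multiple of $H$, the local clocks of the tasks of $S$ take the same $k$-independent values at $qH$ and at $rH$; and since the amount of $\tau_i$-work released by an instant $kH$ differs between $S$ and $S'$ only by a fixed amount depending on $O_i$ and $T_i$ alone, while $\sigma$ has executed the same amount of $\tau_i$-work by that instant in either reading, the remaining-work vector of $S$ and the pre-state of $S'$ determine one another at every such instant. Hence the full state of $S$ along $\sigma$ also recurs at $qH$ and $rH$, and --- $\mathcal{A}$ being deterministic and memoryless --- $\sigma$ reproduces the same decisions from $qH$ on, i.e.\ it is periodic of period $rH-qH$ starting no later than $qH<\left(\prod_{i=1}^{n}((O_i+D_i-T_i)_0+1)\right)H$. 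The one point that is not a one-line invocation of the two lemmas, and which I regard as the main thing to get right, is exactly this last bookkeeping: checking that a recurring pre-state of the synchronous surrogate $S'$ corresponds to a recurring state of the original asynchronous system $S$, so that the bound proved for $S'$ is genuinely a bound for the scheduler running on $S$.
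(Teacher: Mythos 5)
Your proof follows the paper's own route exactly: invoke Lemma~\ref{lem:general} on the synchronous surrogate $S'$ supplied by Lemma~\ref{lem:riDi} and substitute $D'_i=O_i+D_i$ to get the stated bound. The additional bookkeeping you supply at the end --- verifying that a recurring pre-state of $S'$ at multiples of $H$ corresponds to a recurring state of the original asynchronous system $S$, so that the determinism and memorylessness of the scheduler running on $S$ actually forces the cycle --- is a step the paper's one-line proof leaves implicit, and making it explicit only strengthens the argument.
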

\begin{proof}
Since the result stands for synchronous systems with arbitrary deadlines (Lemma \ref{lem:general}), following Lemma \ref{lem:riDi}, we know that any feasible schedule for an asynchronous system $S$ is a feasible schedule for a synchronous system $S'$ such that $O'_i=0$ and $D'_i=O_i+D_i$, hence any feasible schedule for $S$ reaches a cycle at or prior to $\left(\prod_{i=1}^{n}{((D'_i-T_i)_0+1)}\right)H$. Substituting $D'_i$ by $O_i+D_i$ we obtain the Theorem.
\end{proof}

\section{Discussion}
\label{sec:discuss}
In this section, we use the notation $[a,b)$ to characterize a schedule over the time interval starting from $a$ included and ending at $b$ excluded. We consider the star operator as the repetition of its preceding interval, and $[a,b)[c,d)*$ represents the schedule over the interval $[a,b)$ followed by the schedule over $[c,d)$ repeated in cycle.

\paragraph*{Application of the main result}
If we use Theorem~\ref{th:general} on $Sys_1$, we obtain an upper bound of $(3+1)\times 1\times 1\times H=4H$ for the simulation interval. We see that, for LRPTF on Fig.~\ref{fig:lrptf} we have an infinite feasible schedule $[0,H)^*$, while global-EDF on Fig.~\ref{fig:edfmulti} gives a feasible schedule $[0,2H)[2H,3H)^*$. The states reached by global-EDF at each hyperperiod are $(0,0,0,0,0,0)$ at the origin, $(0,0,1,0,0,0)$ at the time $H$, $(0,0,2,0,0,0)$ at $2H$, $(0,0,2,0,0,0)$ at $3H$. We can, as an example, build a feasible schedule lasting $[0,4H)^*$ starting at the state $(0,0,0,0,0,0)$, and then passing by the states $(0,0,1,0,0,0)$ at $H$, $(0,0,2,0,0,0)$ at $2H$, $(0,0,3,0,0,0)$ at $3H$, and $(0,0,3,0,0,0)$ at $4H$, as illustrated on Fig.~\ref{fig:4H}. The scheduling algorithm used to generate such a schedule is not corresponding to any popular scheduling algorithm, but we can imagine a deterministic and memoryless scheduling algorithm, giving this schedule, defined by an array indexed by a state of a system giving for any possible state a scheduling decision.

\paragraph*{Comparison with other existing bounds}
In order to compare our bound to the bound provided for the case of FPP schedulers in~\cite{CucuG07}, we consider a simple system of two tasks $\tau_1$ and $\tau_2$ with the same period $T_1=T_2=8$, and offsets and deadlines given by $O_1=1, D_1=7, O_2=0, D_2=8$, and we consider a FPP scheduler assigning a higher priority to $\tau_1$ than to $\tau2$. Theorem~\ref{th:general} gives a simulation interval $[0,8)$, while the bound given in~\cite{CucuG07} (see Eq.~\ref{eq:sn2}) gives $[0,24)$. In this case, since the deadlines are lower than the periods, we could also use the upper bound given in~\cite{Cucu06} (see Eq.~\ref{eq:sn}), and obtain the simulation interval $[0,16)$.

If we consider a different system with $O_1=1, D_1=7, T_1=12, O_2=0, D_2=9, T_2=8$, then the simulation intervals are $[0, 96)$ for Theorem~\ref{th:general}, and still $[0,24)$ for~\cite{CucuG07}, and cannot be calculated with~\cite{Cucu06}, because $D_1>T_1$. We can see that the bounds are not comparable, therefore, in the case where several upper bounds could be applied, the minimal value giving a simulation interval upper bound should be chosen.

Note that if the tasks were involving any structural constraint as mutual exclusions, precedence constraints, suspension delays, or non preemptive tasks, Theorem~\ref{th:general} would still hold, while the other periodicity results concerning multiprocessor systems are not applicable.

\begin{figure}[!t]
\centering
\includegraphics[width=3.5in]{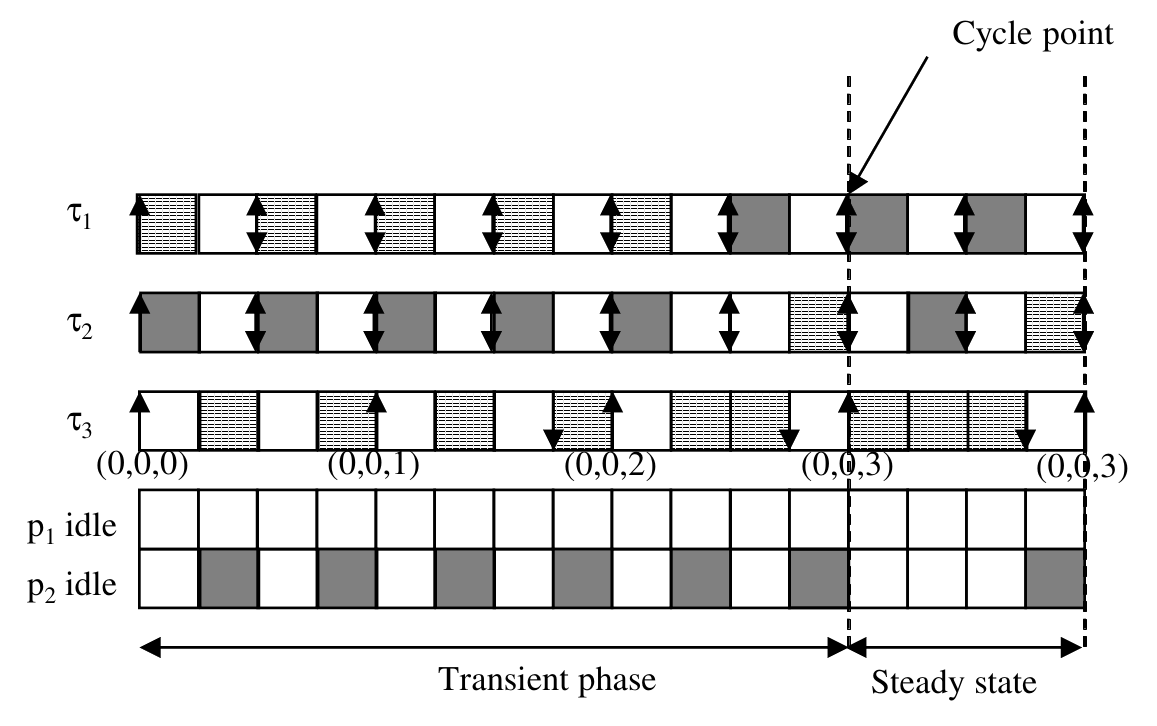}
\caption{Schedule lasting $4H$ generated by a deterministic and memoryless scheduler for $Sys_1$ on two processors}
\label{fig:4H}
\end{figure}

\paragraph*{Tightness}
Our bound is safe, but not tight, as illustrated in the following example. Let a system be composed of two synchronous independent tasks $\tau_1$ and $\tau_2$, executed on a single processor, such that $D_1=T_1+1$ and $D_2=T_2+1$. Theorem~ \ref{th:general} is giving an upper bound of $4H$ for the cycle, because the pre-states that can be reached at each hyperperiod are given by $(0,0,0,0)$, $(0,1,0,0)$, $(1,0,0,0)$ and $(1,1,0,0)$. But clearly, if both tasks have a remaining processing time of one time unit, with zero laxity (both deadlines happen one time unit after the considered hyperperiod), then the schedule cannot be feasible. As a consequence, in this case, the longest feasible schedule without reaching twice the same state is constrained to the time interval $[0,3H)$. In general, a test like a demand bound function could be used to check if the states obtained by the Cartesian product of the possible lateness of the tasks can lead to a feasible schedule or not in order to reduce the bound.

\section{Conclusion}
The problem tackled in this paper is the periodicity problem for feasible schedules produced by any deterministic and memoryless scheduler, in uniprocessor and multiprocessor cases, for any offset-independent structural constraints (mutual exclusions, precedence constraints, self-suspension, non-preemptive tasks, etc.). The result concerning the periodicity of schedules is the most general result ever proposed in the context of uniprocessor scheduling as well as in the context of identical multiprocessor systems, since it concerns any deterministic and memoryless scheduler, arbitrary deadlines, and dependent task systems.
We have proven that for this problem, the synchronous arbitrary deadline case is a generalization of the asynchronous case. This intermediate result has a major impact on the relative simplicity of the proof of the main theorem. Lemma \ref{lem:riDi} could also be used by itself to improve existing or future periodicity results concerning specific scheduling algorithms. Then we have shown that the cycle is reached for any feasible schedule at most at the time $\left(\prod_{i=1}^{n}{((O_i+D_i-T_i)_0+1)}\right)H$. This result might be improved if we take into account the local feasibility of the tasks, but we believe that the applicability of the upper bound would be weakened by the difficulty to handle it in this extended form. 

We also want to stress the fact that our result is an upper bound for any deterministic and memoryless scheduler, therefore it may be improved for specific scheduling algorithms. As an example, specific bounds concerning FPP schedulers like in~\cite{CucuG07}, can offer a lower or larger upper bound than ours depending on the case. The best known bound would then to be considered, for such a specific case (FPP, independent tasks), as the minimal value of the two upper bounds.

In the future, we plan to extend this result to uniform and unrelated multiprocessor platforms. We also plan to improve existing bounds for specific scheduling algorithms using our intermediate result, the Lemma \ref{lem:riDi}.

\section*{Acknowledgments}
We wish to thank Pascal Richard for our very inspiring discussions.
\bibliographystyle{acm}
\bibliography{biblio}



\end{document}